\def \ca{\cancel}
\def \ALGO{\textrm{ALGO}}
\def\Expo{{\bf Expo}}
\def \app#1#2#3#4#5{\begin{array}{rccl} #1:&#2&\longrightarrow&#3\\ &#4&\longmapsto&#5\end{array}}
\author{Jean-Fran{\c c}ois Marckert\addressmark{1}
  \and Nasser Saheb-Djahromi\addressmark{1}
  \and Akka Zemmari\addressmark{1}}
\title[Election with delays in trees]{Election algorithms with random delays in trees}
\address{\addressmark{1}LaBRI, Universit\'e de Bordeaux - CNRS, 351 cours de la
  Lib\'eration, 33405 Talence, France}
\begin{document}
\newtheorem{proposition}{Proposition}
\newtheorem{theorem}{Theorem}
\newtheorem{lemma}{Lemma}
\newtheorem{corollary}{Corollary}
\newtheorem{remark}{Remark}
\newtheorem{Example}{Example}

\maketitle
\begin{abstract}
The election is a classical problem in distributed algorithmic. It aims to design and to analyze a distributed algorithm choosing a node in a graph, here, in a tree. In this paper, a class of randomized algorithms for the election is studied. The election amounts to removing leaves one by one until the tree is reduced to a unique node which is then elected. The algorithm assigns to each leaf a probability distribution (that may depends on the information transmitted by the eliminated nodes) used by the leaf to generate its remaining random lifetime. In the general case, the probability of each node to be elected is given. For two categories of algorithms, close formulas are provided.
\par
\vspace*{0.2cm}
{\bf Key words:} Distributed Algorithm, Election
Algorithm, Probabilistic Analysis, Random Process.
\end{abstract}

\section{Introduction}
\subsection{The problem}
Starting from a configuration where all processors
are in the same state, the goal of an election algorithm is to obtain a configuration where
exactly one processor is in the state \it leader\rm, the other ones
being in the state \it lost\rm.  The (leader) election problem is
often the first problem to solve in a distributed environment. A
leader permits to centralize some information, to make some decisions,
to coordinate the processors for subsequent
tasks. Hence, the election problem -- first posed by Le Lann in
\cite{L77} -- is one of the most studied problems in distributed
algorithmic, and this under many different assumptions
\cite{T00}. The graph encoding the relations between the processors can be a ring, a tree, a complete or a
general connected graph. The system can be synchronous or asynchronous
and processors may have access to a total or partial information of
the geometry of the underlying graph, or of the current state of the
system, etc.\par

In this paper we consider the case of election in trees, when the
nodes have at time $t = 0$ a very partial information on the geometry
of the tree: each node only knows its number of neighbors. A
possible method for electing in a tree, introduced by Angluin
(\cite{A80} Theorem 4.4), amounts to eliminating successively the
leaves till only one node remains, the leader. In this
paper, we investigate this method in the general case: assume that a
node $u$ being a leaf (was a leaf at time $t=0$, or that becomes a leaf
at time $t$) decides to live a random remaining time $D_u$ before
being eliminated; in other words, it is eliminated at time $t+D_u$
except if it is elected before this date. Starting with a given tree $T_0$ at time 0, denote by $T_t$
the tree constituted with the non-eliminated nodes at time $t$. The
family $(T_t)_{t\geq 0}$ is a random process taking its values in the
set of trees. Given $T_0$, the distribution of $(T_t)_{t\geq 0}$ --
and then, also the probability that a given node is elected -- depends
on the way the nodes choose the distribution according to which they
will compute their random remaining lifetime. \\
\noindent -- In \cite{MS94} the authors consider two elementary
approaches. The first one is based on the assumption that \emph{all
sequences} of leaves elimination have the \emph{same} probability 
(no distributed algorithm seems to have this property).  Their second approach assumes that at
each step \emph{all leaves} have the same probability of being
removed. This corresponds to the case where the $D_u's$ are all
exponentially distributed with parameter 1. The authors study
thoroughly both approaches and prove many properties of 
resulting random processes. \\ 
-- In \cite{MSZ05}, the authors show
that if the nodes suitably choose their remaining random lifetime
then a totally fair election process is possible, the nodes 
being elected equally likely (in
Section \ref{exa} this example is revisited). In \cite{HSZ05a} and \cite{HRSZu}, the authors extend the
result from \cite{MSZ05} to a more general class of graphs: the
polyominoid graphs. They also prove a conjecture: the expected value
of the election duration is equal to $\log n$.\\

In this paper, we investigate the general case, namely, we consider
the case where a leaf $u$ generates its remaining lifetime $D_u$
according to a distribution ${\cal D}_u$, where ${\cal D}_u$ may
depend on all the information that $u$ has at its disposal (see
Remark \ref{rem1} below). We warm the reader to distinguish the notation ${\cal D}_u$ and $D_u$.

\begin{remark}\label{fr}
-- In order to avoid that two nodes may disappear exactly at the same
time, the distributions ${\cal D}_u$ need to avoid atoms (points with a positive mass).
Even if not recalled in
the statements, we assume that the distributions ${\cal D}_u$ have 
no atom. (In Section \ref{treo} a case where ${\cal D}_u$ maybe 0 with a positive probability arises
and leads to problems).

-- It is assumed throughout the paper, that the nodes own independent
random generators. This assumption is needed each time that the
independence argument is used in the paper.
\end{remark}

\subsection{The general scheme} 
Throughout this paper $T=(V,E)$ is a tree in the graph theoretic sense:
$V$ is its set of nodes, $E$ the set of edges. The graph $T$ is acyclic and connected, and undirected. The \it size \rm of $T$, denoted by $|T|$, is the
number of nodes.\par In the class of algorithms we study, a node $u$
becoming a leaf at time $t$ (or which was a leaf at time $t=0$)
disappears at time $t+D_u$ (except if it is elected before!); the
quantity $D_u$, called the remaining lifetime of $u$, is computed
locally by the leaf $u$. The description of the way $u$ chooses the
distribution ${\cal D}_u$ is crucial: this description is in fact
equivalent to the description of an algorithm using the general method
of elimination of leaves. We then enter into details here.\par

When a leaf is eliminated, it may transmit to its unique neighbor some
\sl information \rm (this notion will be formalized below). During the
execution of the algorithm, as a result of the successive eliminations
of the leaves, each internal node $u$ eventually becomes a leaf, say at
time $t_u$. At this time, it may use the information received to compute
the distribution ${\cal D}_u$: then, it generates a random variable $D_u$ following ${\cal D}_u$ using a random generator.
After this delay (at time $t_u+D_u$), $u$ is eliminated: it may transmit some information to its (unique) neighbor, and disappears from the tree. The
election goes on till eventually only one single node remains; this
node is then elected.\par
\begin{figure}[htbp]
\centerline{\includegraphics[height=2.2cm]{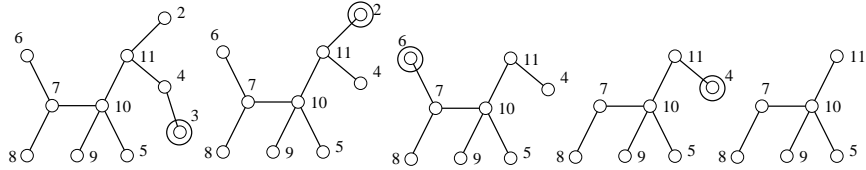}}
\label{el}
\caption{On this example, are circled at each step the next leaf to
disappear. On this example, the remaining lifetime of the leaf 11,
according to an algorithm $\Delta$ is allowed to depend on the
information given by the nodes 2 and 4; the information provided by 4 may include the information it received from the node 3. The total information received by 11 has a forest structure (a forest having 2,4 as roots, and having as set of nodes $2,3,4$, and possibly containing all the lifetimes, prescribed weights, and computed values of these nodes).} 
\end{figure}

As said above, the key point here is to understand that an algorithm (from the class we study) is parametrized by the way a node $u$ chooses -- according to the
information it has -- the distribution ${\cal D}_u$.

We here formalize more precisely what we understand by \sl information
received and information transmitted, \rm this needed to be coherent with the
distributed model we consider. This will straightforwardly leads to
the formal definition of our class of algorithms.
\begin{itemize}
\item[$a)$] The only information a node $u$ has at time 0 is its
degree $\deg_u$ and a prescribed weight $w_u$, which is an element of
$\mathbb{R}$, $\mathbb{R}^d$ or any set (this may be viewed as a personal parameter),
\item[$b)$] at its time of
disappearance a leaf $u$ transmits to its unique neighbor $v$ all the
information it has:\\ 
-- the information it has received from
its neighbors eliminated nodes,\\ 
-- the 4-tuple $L_u=(\deg_u,D_u,w_u,\Gamma_u)$ which is the \it local value \rm of $u$; the quantity $\Gamma_u$ is computed by $u$ using the information it has received and possibly the pair $(\deg_u,w_u)$. In the application we have, $\Gamma_u$ is used to compute ${\cal D}_u$, and then we assume that $\Gamma_u$ is not a function of $D_u$. We call $\Gamma_u$ the \it computed value \rm of $u$, it may belong to any set. See the remark below.
\end{itemize}
Assume that a node $u$ becomes a leaf at time $t$ when $k$ of its
$k+1$ neighbors $v_1,\dots, v_k$, have been
eliminated. Denote by $I_1,\dots,I_k$ the information these nodes have
transmitted to $u$. The node $u$ has at its disposal the multiset
$\{I_1,\dots,I_k\}$. Recursively, one sees that the structure of the information received by $u$ is a forest with $k$ rooted trees (a forest being here a multiset of trees) rooted at the $v_i$'s and constituted with eliminated nodes; this forest has the geometry of the tree $T$ fringed at the $v_i$'s. The node $u$ formally knows the local value of each of the nodes of this forest.
\begin{remark} \label{rem1}
$\bullet$ $w_u$ and $\Gamma_u$ are not used by each
   algorithm: when not used, they may be supposed to be 0.\par

\noindent $\bullet$ The notion of computed values aims to
   simplify the description of some algorithms, summing the needed information. Formally the
   transmission of this value is not necessary since it can be computed by a node having in hand all the other information.\par

\noindent $\bullet$ Let $\mu$ be a distribution on $\mathbb{R}$ with
cumulative distribution function $F$. If $U$ is uniform on $[0,1]$ 
then the law of $F^{-1}(U)$ is $\mu$, where $F^{-1}(u)=\inf\{x\mid F(x)\geq
u\}$ is the right continuous inverse of $F$; hence to simulate
any  distribution $\mu$, a uniform random
variable on $[0,1]$ is sufficient. We assume that the nodes have at
their disposal some independent random generators providing uniform
random values on $[0,1]$.
\end{remark}

Hence clearly, the information a node has received can be encoded
without loss of information by a labelled forest $f$, where each node
$v$ is labelled by the 4-tuple $L_v$. The set of received information
will then be identified with ${\cal F}$ the set of forests labelled by
4-tuple corresponding to the $L_u$'s.\par The other information at the
disposal of a given node $u$ that may be used to compute ${\cal D}_u$
is its own \it local information \rm
$L^\star_u=(\deg(u),w_u,\Gamma_u)$, where as said above $\Gamma_u$ has
been computed using $(\deg(u), w_u)$ and the received information. We
denote by ${\cal L}^\star$ the set of local information.\par

An algorithm is then just parametrized by a function $\Delta$
\[\app{\Delta}{{\cal F}\times {\cal L}^\star}{\cal M}{(f,l^\star)}{\Delta(f,l^\star)}\]
where ${\cal M}$ is the set of probability measures having their
support included in $[0,+\infty)$. The function $\Delta$ associates
with a pair $(f,l^\star)$ a probability distribution
$\Delta(f,l^\star)$.  Any map $\Delta$ encodes an algorithm
$\ALGO(\Delta)$: when $\ALGO(\Delta)$ is used, a node $u$ becoming a
leaf and having received the information $f$ and having as local information
$l_u^\star$, computes ${\cal D}_u=\Delta(f,l^\star)$ and generates $D_u$
according to ${\cal D}_u$. The maps $\Delta$ exemplified below depend
only on a part of the information received. The algorithms
$\ALGO(\Delta)$ are in the class of algorithms using the method of
Angluin, and satisfy the constraints to be distributed.
\begin{Example}\rm\label{esa1}We translate into the form $\ALGO(\Delta)$ the
algorithm defined in M\'etivier \& al. \cite{MSZ05}. For each node $u$, $w_u=1$. A node which is a leaf at time 0 computes $\Gamma_u=1$. Let $u$
be an internal node and $\Gamma_{v_1},\dots,\Gamma_{v_k}$ be the
computed values of the eliminated neighbors of $u$. Then $u$ computes:
\begin{equation}
\Gamma_u=1+\Gamma_{v_1}+\dots+\Gamma_{v_k}.
\end{equation}
Now the application $\Delta$ depends only on the computed values:
suppose that $u$ has received $(f,l^\star)$ and has computed
$\Gamma_u$, then ${\cal D}_u=\Delta(f,l^\star)$ is simply
$\Expo(\Gamma_u)$, the exponential distribution\footnote{a random variable r.v. ${\mathcal E}$ has
the distribution $\Expo(a)$, for some $a>0$ if
$\mathbb{P}({\mathcal E}\geq x)= \exp(-ax), \ \ \mbox{for all } x
\geq 0.$} with parameter
$\Gamma_u$. Hence, ${\cal D}_u=\Expo(1)$ if $u$ is a leaf at time
$0$, and if $u$ becomes a leaf later, then ${\cal
D}_u=\Expo(\Gamma_u)$, where $\Gamma_u$ equals one plus the size of the
forest of eliminated nodes leading to it (see Fig. 1). It
turns out that in this case, each node is elected equally likely (for all tree $T$). We provide in Section
\ref{exa} a new proof of this fact. M\'etivier et al. \cite{MSZ05},
\cite{HSZ05a} and \cite{HSZ05b} introduced election algorithms on
trees, $k$-trees and polyominoids having also this property.
\end{Example}

\medskip


We address the question to compute according a general $\ALGO(\Delta)$,
the probability $q_u$ that a given node $u$ is eventually elected. In Section \ref{PGE}
we answer in the general case to this question, and express the result in terms of 
properties of some variables
arising in a related problem of directed elimination.


In the sequel, we introduce and study two categories of
algorithms in the class of algorithms $\ALGO(\Delta)$.  Before
discussing their properties, we have to say that in order to get 
close formulas for $(q_u)_{u \in V}$, some stabilities in the
computations are necessary, and this is not possible for general
functions $\Delta$. The two categories
we propose raise on two different kinds of stability: the
$(\max,+)$ algebra in distribution, and the stable distributions for
the convolutions.  \par -- The first one is built using the properties
of the exponential distribution, and generalizes the computation of
M\'etivier \& al: the application $\Delta$ takes its values in the
set of exponential distributions union the set of convolutions of such
distributions. This category contains an algorithm $\ALGO(\Delta)$
such that $(q_u)_{u \in V}$ is proportional to the prescribed weights
$(w_u)_{u \in V}$. For technical reasons the prescribed weights
$(w_u)_{u \in V}$ are to be integer valued. When the $(w_u)_{u \in V}$
are allowed to be real numbers, we propose an algorithm which elects
proportionally to these weights in case of success, but which fails
with a \it low \rm probability,\par -- the second category may be less
interesting from an algorithmic point of view, since the algorithms
are more time consuming than the algorithms of the first category; it
has however two main advantages: it clarify in some sense the
properties needed to make the computation for a given function
$\Delta$, and it leads to a surprising proof of some mathematical
identities involving the function $\arctan$.

\section{General case: probability of a given node to be elected}
\label{PGE}
In this section, we give a general formula giving $(q_u)_{u \in V}$
for $\ALGO(\Delta)$. The proposition below is a generalization of a
proposition of M\'etivier \&. al \cite{MSZ05} (the coupling argument
we use is new).\par

The idea of the proof is to decompose the event $\{u \textrm{ is not
elected }\}$ into disjoint events: if $u$ is not elected, this means
that $u$ has become a leaf (or was a leaf at $t=0$) and then has been
eliminated. Let $t$ be the time when $u$ has become a leaf. At this time
$u$ had only one neighbor $v$, and since afterward $u$ was not
elected, this means that $u$ has disappeared before $v$. If at time 0,
$u$ has $k$ neighbors $v_1,\dots,v_k$ in the tree $T$, all of these
nodes are possibly the last surviving node $v$ evoked above: the family of events 
\begin{equation}\label{Ei}
E_i = \{u \textrm{ is not elected and the last neighbor of $u$ was }v_i\}.
\end{equation}
are the ``disjoint events'' mentioned above. We just have to compute $\mathbb{P}(E_i)$.

Our idea to compute the probability of this event is to change of
point of view, and to introduce a notion of \it directed elimination\rm:
if $u$ is eliminated before $v$, this means that the sub-tree
$T[u,\ca{v}]$ -- which is defined to be the tree rooted in $u$ maximal
for the inclusion in $T$ which does not contain $v$ (see Fig. 2)) -- disappears
entirely before $T[v,\ca{u}]$; in the tree $T[u,\ca{v}]$ the
elimination is done from the leaves to the root $u$.

\subsection{Directed elimination in rooted trees}  
\begin{figure}[htbp]
\psfrag{u}{$u$}
\psfrag{v}{$v$}
\centerline{\includegraphics[height=1.8cm]{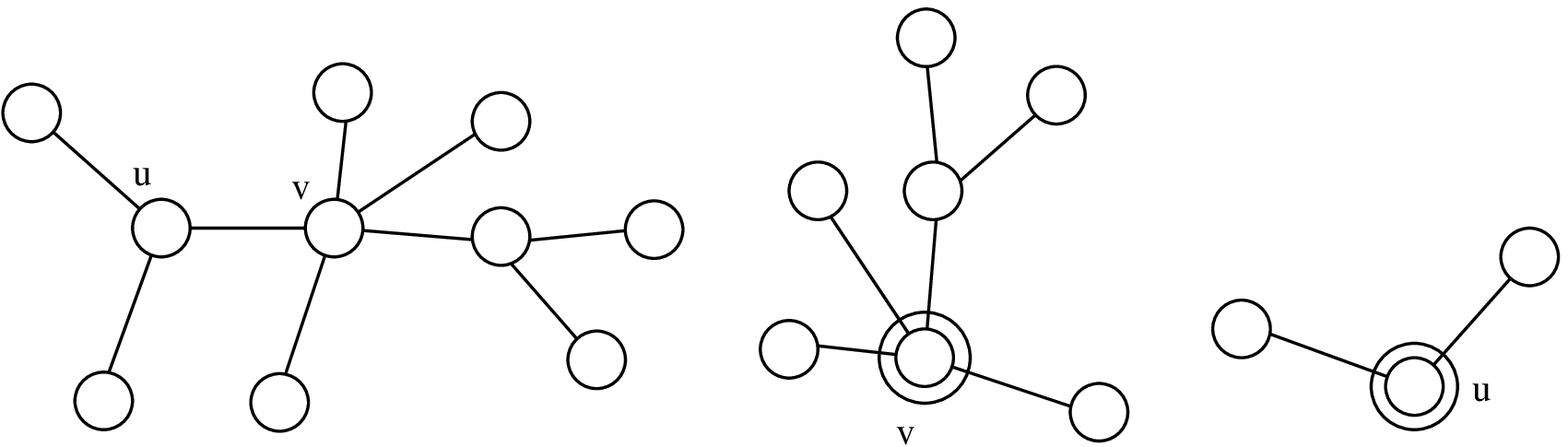}}
\label{el}
\caption{A tree $T$, and the two rooted trees $T[v,\ca{u}]$ and $T[u,\ca{v}]$}
\end{figure}
We define an algorithm $\ALGO^\star(\Delta)$ (very similar to
$\ALGO(\Delta)$) which aims to eliminate all the
nodes of a \textit{rooted} tree, from the leaves to the root. 
We do not investigate the election since the last
living node will be the root, but we are interested in
the duration of the directed elimination of the whole tree. \par

We define $\ALGO^\star(\Delta)$ recursively on a rooted tree
$\tau$. The only difference between $\ALGO(\Delta)$ and
$\ALGO^\star(\Delta)$ is that with $\ALGO^\star(\Delta)$ the root of
$\tau$ is never considered as a leaf: using $\ALGO^\star(\Delta)$ \\
-- the leaves
of $\tau$ are eliminated as with $\ALGO(\Delta)$, transmit and
receive the same information, and compute their remaining lifetimes distribution
with the same function $\Delta$, but \it the root of $\tau$ is not
considered as a leaf\rm, even if it has only one child,\\
-- when the
root $v$ of $\tau$ becomes alone, it has received some information
from its neighbors (or none if it was yet alone at time 0), then it
computes using $\Delta$ the distribution ${\cal D}_v^\star$, and generate $D^\star_v$ accordingly; in other words, the root once alone behaves as a leaf in $\ALGO(\Delta)$. After the delay $D^\star_v$, 
$v$ disappears.\medskip

We define the \it duration \rm $D^\star(\tau)$ of the whole tree $\tau$ rooted in $v$ according to $\ALGO^{\star}(\Delta)$ as the date of disappearance of $v$.
If $\tau$ is a rooted tree with root $u$, and such that the subtree
of $\tau$ rooted at the children of $u$ are $\tau_1,\dots, \tau_k$:
one has
\begin{equation}\label{m2}
D^\star(\tau)=D^\star_u+\max_i D^\star(\tau_i);
\end{equation}
$D^\star_u$ has a distribution given by $\Delta$ with the same rules
as in $\ALGO(\Delta)$. \medskip

We come back in the election problem in a (unrooted) tree $T$ according to $\ALGO(\Delta)$.
Let $u$ and $v$ be two neighbors in a tree $T$; consider in one hand the event
\[E_{u,v}=\{u \textrm{ is not elected and the last neighbor of $u$ is }v\}\]
corresponding to a generic event $E_i$ in (\ref{Ei}). In the other
hand, the two trees $T[u,\ca{v}]$ and $T[v,\ca{u}]$ are rooted trees,
respectively in $u$ and $v$; consider two independent directed
eliminations on these trees as explained above, and denote by
$D^\star(T[u,\ca v])$ and $D^\star(T[v,\ca u])$ their independent
durations. It turns out that
\begin{proposition}The following identity holds true:
\begin{equation}\label{peuv}
\mathbb{P}(E_{u,v}) = \mathbb{P}\big( D^\star(T[u,\ca v])<D^\star(T[v,\ca u])\big).
\end{equation}
\end{proposition}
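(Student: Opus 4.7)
The plan is to couple the election process $\ALGO(\Delta)$ on $T$ with two independent directed eliminations $\ALGO^\star(\Delta)$ on the rooted subtrees $T[u,\ca v]$ and $T[v,\ca u]$, so that $E_{u,v}$ occurs pointwise iff $D^\star(T[u,\ca v]) < D^\star(T[v,\ca u])$. By Remark \ref{rem1} it suffices to equip each node $w \in V$ with an independent sequence of $\mathrm{Uniform}[0,1]$ variables: every time $w$ samples a lifetime in any of the three processes, it consumes the next uniform of its sequence and applies the inverse of the required cumulative distribution function. Since $V(T[u,\ca v])$ and $V(T[v,\ca u])$ partition $V$, the independence of the two directed eliminations is built into this coupling.

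The core of the argument is the inductive claim that, up to the time $\tau := \min\bigl(D^\star(T[u,\ca v]),\, D^\star(T[v,\ca u])\bigr)$, the history of the original process restricted to $T[u,\ca v]$ coincides with the directed elimination on $T[u,\ca v]$, and symmetrically for $T[v,\ca u]$. I would prove this by induction on successive death events: the key observation is that before either of $u,v$ dies no message can cross the edge $uv$, so any leaf $w \in T[u,\ca v] \setminus \{u\}$ being eliminated receives exactly the same labelled forest of information and has the same local information in both processes; since both processes consume the same uniforms, they compute the same $\Gamma_w$, sample the same $D_w$ and produce the same ordered sequence of deaths.

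Granting the claim, on the event $\{D^\star(T[u,\ca v]) < D^\star(T[v,\ca u])\}$, at time $D^\star(T[u,\ca v])$ all of $T[u,\ca v] \setminus \{u\}$ is dead, $v$ is still alive, and $u$'s remaining lifetime as a lone leaf is precisely the one used in the directed elimination, so $u$ dies at that instant with $v$ as its unique remaining neighbor: $E_{u,v}$ holds. Conversely, if $E_{u,v}$ holds then $v$ is alive when $u$ dies, so the whole evolution up to that moment is purely internal to each half and the coupling forces $D^\star(T[u,\ca v]) < D^\star(T[v,\ca u])$. The no-atom assumption yields $\mathbb{P}(D^\star(T[u,\ca v]) = D^\star(T[v,\ca u])) = 0$, and (\ref{peuv}) follows.

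The main obstacle is the bookkeeping behind the inductive claim: at every death event strictly before $\tau$ one has to verify that the forest of received information and the local information of the dying leaf are identical in both processes. This is precisely what the formal encoding of ``information'' as a forest of $4$-tuples $L_w$, introduced just before the proposition, is designed to guarantee.
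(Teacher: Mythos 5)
Your proposal is correct and takes essentially the same route as the paper: the identical coupling in which each node's lifetime samples are driven by shared uniform variables (via Remark \ref{rem1}), so that the election on $T$ and the two directed eliminations coincide pointwise until the first of $u,v$ dies, making the events $E_{u,v}$ and $\{D^\star(T[u,\ca v])<D^\star(T[v,\ca u])\}$ equal on this common probability space. One small point to tighten: since each node samples a lifetime exactly once per process, a single uniform per node suffices, and you should state explicitly that corresponding samples in the different processes reuse the \emph{same} uniform --- your phrase about consuming ``the next uniform'' at every sampling event in ``any of the three processes,'' read literally with one shared counter, would decouple the lifetimes and break the coupling (your later sentence ``both processes consume the same uniforms'' shows the correct intent).
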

\begin{proof}
 \rm We propose a proof via a coupling argument. The idea is to compare
 the election process which takes place in $T$ with the directed
 eliminations in $T[v,\ca{u}]$ and $T[u,\ca{v}]$, that are
 directed. The comparison is not immediate since these algorithms 
are not defined on the same probability space.\par 
The algorithms $\ALGO(\Delta)$ and $\ALGO^\star(\Delta)$
 allow each node $u$ to choose a distribution ${\cal D}_u$ or ${\cal
 D}^\star_u$ depending on the information received, from which the
 nodes generate their lifetimes $D_u$ or $D^\star_u$. According to
 Remark \ref{rem1}, a variable $U$ uniform is sufficient to generate
 $D_u$ or $D^\star_u$. Hence, we  suppose that at time 0 each
 node $w$ in the tree $T$ has at its disposal a real number $U_w$
 obtained by a uniform random generator on $[0,1]$.  This is
 the key-point: a node $w$ in $T$ maybe considered also as
 a node in $T[v,\ca{u}]$ or in $T[u,\ca{v}]$, depending on which of
 these trees it belongs. If one now executes  $\ALGO(\Delta)$ on $T$ 
and $\ALGO^\star(\Delta)$ on $T[u,\ca{v}]$ and
 $T[v,\ca{u}]$ using the variable $U_w$ for the generation of the
 $D_w$'s and the $D^\star_w$'s, one can compare the events
 $\{E_{u,v}\}$ and $\{D^\star(T[u,\ca v])<D^\star(T[v,\ca u])\}$, since
 they are now on the same probability space. \par 
It turns out that for each assignment of the
 $U_w$'s, we have
$\{E_{u,v}\}=\{D^\star(T[u,\ca v])<D^\star(T[v,\ca u])\}$. Indeed, since both
 algorithms use the $U_w$'s, since the
 algorithms have the same constructions and the same rules concerning
 $\Delta$, we see that the disappearance of leaves coincide in the two
 models till the disappearance of $u$ or of $v$: after this time, the
 information transmitted are different, and then the two processes
 evolve in a non comparable manner.  Now, in the election process
 $\ALGO(\Delta)$ in $T$, if $u$ is eliminated before $v$, then the
 tree $T[u,\ca v]$ has lived a directed election, and thus
 $D^\star(T[u,\ca v])$ coincides with the disappearance time of $u$
 (for $\ALGO(\Delta)$). At this time, since $v$ is still alive, this
 means that the directed elimination in $T[v,\ca u]$ is not finished,
 thus $D^\star(T[u,\ca v])<D^\star(T[v,\ca u])$. Conversely, if
 $D^\star(T[u,\ca v])<D^\star(T[v,\ca u])$, then $u$ disappears before
 $v$ according to $\ALGO(\Delta)$, since till the time
 $\min(D^\star(T[u,\ca v]),D^\star(T[v,\ca u]))$ the two elimination
 processes coincide. \par We then have construct a probability space
 (the one where are defined the $U_w$'s) on which the two events
 $\{E_{u,v}\}$ and $\{D^\star(T[u,\ca v])<D^\star(T[v,\ca u])\}$
 coincide; thus, they have the same probability.
\end{proof}

As a corollary we have
\begin{corollary}\label{elec-gene}
Let $u$ be a node of a tree $T$ and $u_1$,..., $u_k$ its
neighbors. Using $\ALGO(\Delta)$
\begin{equation}
q_u= 1-\sum_{1\leq i\leq k}\mathbb{P}\big( D^\star(T[u,\ca
u_i])<D^\star(T[u_i,\ca u])\big).
\end{equation}
\end{corollary}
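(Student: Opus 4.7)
The plan is to deduce the corollary directly from the proposition that has just been established, together with a simple disjoint-union argument for the complement event $\{u \text{ is not elected}\}$.

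First I would observe that if $u$ is not elected, then by definition $u$ is eliminated at some positive time $t_\star$, and at that time $u$ must be a leaf of the current tree $T_{t_\star}$. Consequently $u$ has exactly one neighbor still alive at time $t_\star$, and that surviving neighbor is necessarily one of the original neighbors $u_1,\dots,u_k$ of $u$ in $T_0=T$ (neighbors in $T_t$ are always neighbors in $T$, since elimination only removes nodes and incident edges). This ``last surviving neighbor'' of $u$ is uniquely defined (because the distributions $\mathcal{D}_u$ have no atoms, so there are no ties almost surely, cf. Remark~\ref{fr}). Hence
\begin{equation*}
\{u \text{ is not elected}\} \;=\; \bigsqcup_{1\leq i\leq k} E_{u,u_i},
\end{equation*}
the union being disjoint, where $E_{u,u_i}$ is the event defined in \eqref{Ei}.

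Next I would apply the proposition to each pair $(u,u_i)$, which gives
\begin{equation*}
\mathbb{P}(E_{u,u_i}) \;=\; \mathbb{P}\bigl(D^\star(T[u,\ca{u_i}])<D^\star(T[u_i,\ca u])\bigr),
\end{equation*}
where the two durations on the right are computed by independent directed eliminations on the rooted subtrees $T[u,\ca{u_i}]$ and $T[u_i,\ca u]$. Summing over $i$ and using the disjointness above,
\begin{equation*}
\mathbb{P}(u \text{ is not elected}) \;=\; \sum_{1\leq i\leq k}\mathbb{P}\bigl(D^\star(T[u,\ca{u_i}])<D^\star(T[u_i,\ca u])\bigr),
\end{equation*}
and taking the complement yields the formula for $q_u=1-\mathbb{P}(u \text{ is not elected})$.

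There is no real obstacle here, since the heavy lifting is done by the proposition. The only subtle point to spell out is the partition claim, which relies on (i) the fact that a leaf about to be eliminated has a unique remaining neighbor, (ii) that this neighbor is one of the original neighbors in $T$, and (iii) the no-atom assumption guaranteeing the ``last neighbor'' is a.s. unambiguous; none of these requires more than a line of justification.
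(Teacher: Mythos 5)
Your proof is correct and follows exactly the route the paper intends: the paper introduces the disjoint events $E_i$ in (\ref{Ei}) precisely for this decomposition of $\{u \textrm{ is not elected}\}$, and the corollary is obtained by summing the proposition's identity (\ref{peuv}) over the neighbors $u_1,\dots,u_k$ and taking the complement. Your added justifications of the partition (unique last surviving neighbor, original neighbors only, no ties by the no-atom assumption of Remark \ref{fr}) are sound and merely make explicit what the paper leaves implicit.
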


\section{First category: around the $(\max,+)$ algebra}

In this category, the distribution ${\cal D}_u$  are either the exponential
distribution or a convolution of such distributions. We will see that this category contains the
algorithm of Métivier \&. al. allowing to elect uniformly in the
tree, an algorithm electing proportionally to positive integer
valued prescribed weights, some algorithms allowing to elect
proportionally to some structural features of the tree. \par
Before doing this, we recall some
classical facts. In the sequel ${\mathcal E}^{[a]}$ denote a r.v. having the $\Expo(a)$ distribution, and 
$M_n= \max_{1\leq i \leq n}{\mathcal E}_i^{[1]}$ is the maximum of $n$
i.i.d. r.v. $\Expo(1)$ distributed. The distribution of $M_n$ is
denoted from now on by ${\cal M}_n$ (we have
$\mathbb{P}(M_n\leq x)=(1-\exp(-x))^n$, for any $x\geq 0$).
\begin{lemma}\label{expo}
Let ${\mathcal E}^{[1]},...,{\mathcal E}^{[n]}$ be $n$ independent
exponential random variables with parameters $1,\dots,n$. The random
variables ${\mathcal E}^{[1]}+...+{\mathcal E}^{[n]}$ has
distribution ${\cal M}_n$.
\end{lemma}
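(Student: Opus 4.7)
The plan is to prove the lemma by induction on $n$, exploiting the memoryless property of the exponential distribution to relate $M_n$ to $M_{n-1}$. The base case $n=1$ is immediate, since $M_1 = \mathcal{E}_1^{[1]}$ is already $\text{Exp}(1)$-distributed, which matches the right-hand side.

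For the inductive step, I would write
\[
M_n \;=\; \min_{1\le i\le n} \mathcal{E}_i^{[1]} \;+\; \Bigl(\max_{1\le i\le n} \mathcal{E}_i^{[1]} \;-\; \min_{1\le i\le n} \mathcal{E}_i^{[1]}\Bigr).
\]
The standard exponential race argument gives that the minimum of $n$ i.i.d.\ $\text{Exp}(1)$ variables has distribution $\text{Exp}(n)$, and that the (random) index $J\in\{1,\dots,n\}$ realizing this minimum is uniform and independent of the value of the minimum. Conditionally on $J=j$ and on the minimum equal to $t$, the memoryless property asserts that the $n-1$ overshoots $(\mathcal{E}_i^{[1]}-t)_{i\ne j}$ are i.i.d.\ $\text{Exp}(1)$ and independent of $(J,\min_i \mathcal{E}_i^{[1]})$. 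Consequently the second term in the decomposition above has distribution $\mathcal{M}_{n-1}$ and is independent of the first. Identifying the minimum with $\mathcal{E}^{[n]}$ and applying the induction hypothesis to $\mathcal{M}_{n-1}$ yields
\[
M_n \;\stackrel{d}{=}\; \mathcal{E}^{[n]} + \mathcal{E}^{[n-1]} + \dots + \mathcal{E}^{[1]},
\]
which is precisely the claim.

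The only mildly delicate point, which I would take as the main obstacle, is justifying rigorously the independence between the value of the minimum and the vector of overshoots; this is clean via the elementary computation $\mathbb{P}(\min > t,\ J=j,\ \mathcal{E}_i^{[1]}-t > s_i \text{ for } i\ne j) = e^{-nt}\cdot \frac{1}{n}\cdot \prod_{i\ne j}e^{-s_i}$, which factorizes as desired. As a sanity check, one can independently verify the identity via Laplace transforms: the sum has Laplace transform $\prod_{k=1}^{n} k/(k+s)$, while differentiating $(1-e^{-x})^n$ and applying the binomial theorem gives $n\sum_{j=0}^{n-1}\binom{n-1}{j}(-1)^{j}/(1+j+s)$, and a partial-fraction reduction matches the product; but the inductive memoryless argument is much more transparent and avoids any computation.
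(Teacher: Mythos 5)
Your proof is correct and rests on the same idea as the paper's: the paper decomposes $M_n$ all at once into the spacings $\hat{\mathcal E}_i-\hat{\mathcal E}_{i-1}$ of the order statistics and invokes the memoryless property (citing Feller) to identify them as independent $\Expo(n+1-i)$ variables, while your induction peels off one spacing at a time, which is exactly the step underlying that cited fact. So the approaches are essentially identical, with yours having the minor merit of verifying explicitly, via the factorization $e^{-nt}\cdot\frac1n\cdot\prod_{i\neq j}e^{-s_i}$, the independence that the paper delegates to a reference.
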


\begin{proof} Consider $(\hat{{\mathcal E}_i},\ 1\leq i \leq n),$ the
  order statistics of $n$ i.i.d. $\Expo(1)$ random variables
  ${\mathcal E}^{[1]}_1,\dots,{\mathcal E}^{[1]}_n$, that is the
  sequence $({\mathcal E}_i^{[1]},\ 1\leq i \leq n),$ sorted in the
  increasing order.  The variable $M_n=\max {\mathcal E}_i^{[1]} $ is
  also the sum of the random variables $\hat{\mathcal
  E}_i-\hat{\mathcal E}_{i-1}$, for $i=1,\dots,n$ with the convention
  $\hat{\mathcal E}_{0}=0$. Using the memoryless property of the
  exponential distribution, one has $\hat{{\mathcal E}_i}-\hat{
  {\mathcal E}}_{i-1} \stackrel {d}{=} {\mathcal E}^{[n+1-i]}$ for all
  $i \in\{1,\dots,n\}$, and the variables $(\hat{{\mathcal E}_i}-\hat{
  {\mathcal E}}_{i-1})$ are independent (for more details, see
  Proposition p.19 in Feller \cite{FEL}).
\end{proof}

From the lemma we easily derive:
\begin{corollary}\label{cor-expo}
$i)$ Consider  $k\geq 1$ positive integers $a_1,\dots,a_k$ summing to
$n$. If the r.v.  $M_{a_i}$'s are independent, and independent of
${\mathcal E}^{[n+1]}$ then $
M_{n+1} \stackrel {d}{=} {\mathcal E}^{[n+1]}+\max_{1\leq i \leq k}
M_{a_i}.
$ 
\\ 
$ii)$ For any $k\geq 1$ and $n\geq 1$, set
\begin{equation}
\label{first-type}
Y_{n,k}\stackrel {d}{=}{\mathcal E}^{[n+1]}+ {\mathcal E}^{[n+2]}
+...+{\mathcal E}^{[n+k]},
\end{equation}
where the variables ${\mathcal E}^{[n+i]}$ are independent. We have
$M_{n+k}\stackrel {d}{=} M_{n}+Y_{n,k}.$
\end{corollary}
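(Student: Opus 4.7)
The plan is to prove the two parts essentially independently: part $(i)$ via a memoryless-property argument on a concrete construction of $n+1$ i.i.d. $\Expo(1)$ variables, and part $(ii)$ as a direct consequence of Lemma \ref{expo}.

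For part $(i)$, I would start with $n+1$ i.i.d. $\Expo(1)$ variables $X_1,\dots,X_{n+1}$ whose maximum is, by definition, distributed as $M_{n+1}$. I partition the index set $\{1,\dots,n+1\}$ arbitrarily into a singleton plus $k$ blocks of sizes $a_1,\dots,a_k$. The key observation is that $M_{n+1}=\min_i X_i+\max_{j}(X_j-\min_i X_i)$, where the second maximum is over the $n$ indices not achieving the minimum. The minimum of $n+1$ i.i.d. $\Expo(1)$ variables has distribution $\Expo(n+1)$, i.e., is distributed as $\mathcal{E}^{[n+1]}$; and by the memoryless property of the exponential distribution, conditionally on the value of the minimum and on which variable achieves it, the remaining $n$ differences $X_j-\min_i X_i$ are i.i.d.\ $\Expo(1)$ and independent of $\mathcal{E}^{[n+1]}$. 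Split these $n$ differences according to the pre-chosen partition into $k$ groups of sizes $a_1,\dots,a_k$; the maximum over the $j$-th group is, by definition, distributed as $M_{a_j}$, the $k$ maxima are independent (since they are built from disjoint subsets of independent variables), and independent of $\mathcal{E}^{[n+1]}$. Thus
\[ M_{n+1}\stackrel{d}{=}\mathcal{E}^{[n+1]}+\max_{1\leq j\leq k} M_{a_j},\]
which is the desired identity.

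For part $(ii)$, I would simply invoke Lemma \ref{expo} twice. That lemma asserts that $M_{n+k}$ has the same distribution as the sum $\mathcal{E}^{[1]}+\mathcal{E}^{[2]}+\cdots+\mathcal{E}^{[n+k]}$ of independent exponentials with parameters $1,2,\dots,n+k$. Group the summands as $(\mathcal{E}^{[1]}+\cdots+\mathcal{E}^{[n]})+(\mathcal{E}^{[n+1]}+\cdots+\mathcal{E}^{[n+k]})$: the two parentheses are independent, the first is distributed as $M_n$ (by Lemma \ref{expo} again), and the second is exactly $Y_{n,k}$. Hence $M_{n+k}\stackrel{d}{=} M_n+Y_{n,k}$. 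Alternatively, $(ii)$ can be obtained by iterating $(i)$ in the trivial case $k=1$, $a_1=m$, which gives $M_{m+1}\stackrel{d}{=}\mathcal{E}^{[m+1]}+M_m$, and then applying this $k$ times starting from $m=n$.

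The only delicate point is the independence bookkeeping in part $(i)$: one must check that the $k$ maxima $M_{a_j}$ produced by the construction really are mutually independent and independent of the summand $\mathcal{E}^{[n+1]}$. This is where the full strength of the memoryless property (in its joint form: conditioning on the minimum, the overshoots are i.i.d.\ $\Expo(1)$) is used. Once this is accepted — as in the proof of Lemma \ref{expo} — the rest is just a packaging argument.
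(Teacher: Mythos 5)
Your proof is correct, but note that the paper offers no written proof of this corollary at all: it is introduced by ``From the lemma we easily derive,'' and the intended derivation is shorter than yours on part $(i)$. For part $(ii)$ you do exactly what the paper intends: split the telescoping sum of Lemma \ref{expo} into the first $n$ and the last $k$ terms. For part $(i)$, however, the one-line route is to observe that $\max_{1\leq i\leq k} M_{a_i}\stackrel{d}{=}M_n$ whenever the $M_{a_i}$'s are independent --- either because the product of the CDFs is $(1-e^{-x})^{a_1}\cdots(1-e^{-x})^{a_k}=(1-e^{-x})^{n}$, or because a maximum of maxima over disjoint i.i.d.\ families is the maximum of the union --- and then to invoke Lemma \ref{expo} once more to get $M_{n+1}\stackrel{d}{=}M_n+{\mathcal E}^{[n+1]}$. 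You instead re-run the mechanism of the lemma's own proof for the first spacing: peel off the minimum of $n+1$ i.i.d.\ $\Expo(1)$ variables (an $\Expo(n+1)$ variable) and use the joint memoryless property to see the $n$ overshoots as i.i.d.\ $\Expo(1)$ independent of it. That argument is sound, and you correctly flag the delicate point (the joint conditional independence of the overshoots). Its only blemish is a bookkeeping slip: you fix the partition of $\{1,\dots,n+1\}$ into a singleton plus $k$ blocks \emph{in advance}, but the index achieving the minimum is random, so when it falls inside one of the blocks your pre-chosen partition no longer splits the surviving $n$ overshoots into groups of sizes $a_1,\dots,a_k$. This is harmless --- conditionally on the argmin and the value of the minimum, the overshoots are i.i.d., so any measurable regrouping into blocks of sizes $a_1,\dots,a_k$ works --- but the grouping should be performed after conditioning, not before. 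In exchange for its extra length, your construction produces a genuine coupling realizing identity $(i)$ pathwise on a single probability space, in the spirit of the coupling proof of Proposition 1 in the paper, whereas the CDF shortcut yields only the distributional identity, which is all that is used later.
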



\subsection{The algorithms of the first category}

The first category of algorithms we design is based on Corollary
\ref{cor-expo}. It may be more easily understood via
the directed elimination $\ALGO^\star(\Delta)$, where the duration of a rooted tree $\tau$
according to $\ALGO^\star(\Delta)$ will have distribution ${\cal M}_n$,
for some $n$. The application $\Delta$ will take its values in the set of
distributions $\{{\cal Y}[n,k],n\geq 1, k\geq 1\}$, where ${\cal
Y}[n,k]$ is the distribution of $Y_{n,k}$ (given in (\ref{first-type})).\par
The only difference between the algorithms of the first category is the computed values $\Gamma_u$'s~: the class of algorithm considered is then simply parametrized by the possible computed values $\Gamma$ satisfying the constraint below. It is convenient to consider bi-dimensional computed values $\Gamma_u=(C_u,g_u)$ where $C_u$ will be use to add some quantities coming from the received information, and $g_u$ is used to make some local computations.  \par
 Here are in two points the description of all the
algorithms of the first category:\\
-- At time $0$, the computed value $\Gamma_u$ of any leaf $u$ is $\Gamma_u=(0,g_u)$ where $g_u$ is a positive integer. Then set
\begin{equation}\label{leaf}
{\cal D}_u = {\cal Y}[0,g_u]\stackrel{d}{=}M_{C_u+g_u}.
\end{equation}
-- Let $u$ be an internal node in $T$ becoming a leaf; let $f$
be the received information, and in particular let
$\Gamma_1=(C_1,g_1),\dots,\Gamma_k=(C_k,g_k)$ be the computed values of its eliminated neighbors. Then the node $u$ compute an integer value $g_u$ according to its information ($f$ and $L^\star_u$), and let 
$C_u=\sum_{i=1}^k C_i+g_i.$
Then set ${\cal D}_u= {\cal Y}\left[C_u,g_u\right].$

\medskip

\noindent Let us think in terms of directed elimination. Recall that 
the notion of computed values are defined similarly in $\ALGO^\star(\Delta)$ 
and in $\ALGO(\Delta)$, but in the directed case, it is convenient to
make appear the tree notation in the computed values instead of the
node notation. \par 
If a rooted tree $\tau$ is reduced to a leaf $u$,
set $C(\tau)=0, g(\tau)=g_u$. If $\tau$ has root $u$, and if the
sub-trees rooted at the children of $u$ are $\tau_1,\dots,\tau_k$,
then set $C(\tau)=\sum_{i=1}^k C(\tau_i)+g(\tau_i).$
The lifetime of the root of $\tau$ is then
distributed as the maximum of the $D^\star(\tau_i)'s$ plus a random
variable distributed as ${\cal Y}(C(\tau),g(\tau))$.

To simplify a bit the formula, for any rooted tree $\tau$, let
\begin{equation}
\Theta(\tau)=g(\tau)+C(\tau).
\end{equation}

\begin{proposition}\label{elec-first}
For any algorithm $\ALGO^\star(\Delta)$ of the first category 
the duration of  a rooted tree $\tau$ satisfies 
\[D^\star(\tau)\stackrel{d}{=} M_{\Theta(\tau)}.\]
\end{proposition}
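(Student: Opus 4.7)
The plan is to argue by induction on $|\tau|$. The base case $|\tau|=1$ is immediate: the unique node $u$ is simultaneously root and leaf, so $D^\star(\tau)=D^\star_u$, and by (\ref{leaf}) this has distribution ${\cal Y}[0,g_u]\stackrel{d}{=}M_{g_u}$, which matches $\Theta(\tau)=g_u$.

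For the inductive step, I let $u$ be the root of $\tau$ and let $\tau_1,\dots,\tau_r$ be the subtrees rooted at its children, so that by definition $C(\tau)=\sum_{i=1}^r \Theta(\tau_i)$. The subtrees run on disjoint node sets with independent random generators, hence the durations $D^\star(\tau_1),\dots,D^\star(\tau_r)$ are mutually independent and, by the inductive hypothesis, distributed as $M_{\Theta(\tau_1)},\dots,M_{\Theta(\tau_r)}$. Using the CDF $\mathbb{P}(M_n\leq x)=(1-e^{-x})^n$ and independence,
\[\mathbb{P}\bigl(\max_i D^\star(\tau_i)\leq x\bigr)=\prod_{i=1}^r(1-e^{-x})^{\Theta(\tau_i)}=(1-e^{-x})^{C(\tau)},\]
so $\max_i D^\star(\tau_i)\stackrel{d}{=}M_{C(\tau)}$. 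Once the subtrees have disappeared, $u$ draws $D^\star_u$ with its own independent generator from ${\cal Y}[C(\tau),g(\tau)]$. Combining (\ref{m2}) with Corollary \ref{cor-expo}(ii) applied to the pair $(C(\tau),g(\tau))$ then yields
\[D^\star(\tau)=D^\star_u+\max_i D^\star(\tau_i)\stackrel{d}{=}Y_{C(\tau),g(\tau)}+M_{C(\tau)}\stackrel{d}{=}M_{C(\tau)+g(\tau)}=M_{\Theta(\tau)},\]
which closes the induction.

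The point I expect to need the most care is that $C(\tau)$ and $g(\tau)$ are a priori random, being measurable functions of the elimination history inside $\tau$. The clean way to handle this is to condition on that history: since the paper's convention forces $\Gamma_u$ not to depend on $D_u$, the conditional law of $D^\star_u$ given the history is exactly ${\cal Y}[C(\tau),g(\tau)]$, while the inductive hypothesis applied subtree by subtree identifies the conditional law of $\max_i D^\star(\tau_i)$ with $M_{C(\tau)}$. Corollary \ref{cor-expo}(ii) then applies pointwise in the conditioning variables, and integrating out yields the announced distributional identity.
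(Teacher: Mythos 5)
Your induction is, in substance, the paper's own proof: the paper likewise argues by induction on the size of $\tau$, treats the base case through ${\cal Y}[0,g(\tau)]\stackrel{d}{=}M_{\Theta(\tau)}$, uses the independence of the subtree durations $D^\star(\tau_i)\stackrel{d}{=}M_{\Theta(\tau_i)}$, and concludes via Corollary \ref{cor-expo}. Your only variation is to make the intermediate step $\max_i M_{\Theta(\tau_i)}\stackrel{d}{=}M_{C(\tau)}$ explicit through the product of cumulative distribution functions before invoking part $(ii)$; that is exactly the content packaged in part $(i)$ of the corollary, so up to this point your argument is correct and complete.

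Your closing paragraph, however, replaces a sound implicit hypothesis by an unsound argument. If $C(\tau)$ and $g(\tau)$ really were random --- measurable functions of the realized lifetimes of the eliminated descendants --- then conditioning on the elimination history inside $\tau$ would freeze the values $D^\star(\tau_1),\dots,D^\star(\tau_r)$: the conditional law of $\max_i D^\star(\tau_i)$ would be a Dirac mass at the realized maximum, not $M_{C(\tau)}$. Corollary \ref{cor-expo}$(ii)$ is a convolution identity: it needs the summand $M_n$ to carry its full distribution, independently of $Y_{n,k}$, and it has no ``pointwise in the conditioning variables'' version --- conditionally on $M_n=t$, the law of $t+Y_{n,k}$ is a shifted copy of $Y_{n,k}$, certainly not $M_{n+k}$. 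Indeed, if $g(\tau)$ were allowed to depend on the descendants' lifetimes, the proposition would fail: already for a two-node tree with root $u$ above a single leaf $v$, taking $g_u$ a non-constant function of $D_v$ makes the law of $D^\star(\tau)=D_v+D^\star_u$ a mixture which is not $M_n$ for any fixed $n$, and $\Theta(\tau)$ is then not even well defined as an index. The correct resolution is structural rather than probabilistic: in the directed elimination $\ALGO^\star(\Delta)$ on a rooted tree, the forest received by a node is deterministic --- it is exactly the subtree of its descendants --- so as soon as the computed values $g_u$ are functions of the structural information (shape of the received forest, degrees, weights, computed values) and not of the realized lifetimes, which is what the paper's notation $g(\tau)$, $C(\tau)$ as functions of $\tau$ alone presupposes and what all of its examples satisfy, the quantities $C(\tau)$, $g(\tau)$, $\Theta(\tau)$ are deterministic integers. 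Under that reading no conditioning is needed: $D^\star_u$ is independent of the subtree durations by the independent-generator assumption of Remark \ref{fr}, its (unconditional) law is ${\cal Y}[C(\tau),g(\tau)]$ with deterministic parameters, and your main induction stands exactly as written.
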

\begin{proof}
The
lifetime of a tree $\tau$ reduced to a leaf is ${\cal Y}(0,g(\tau))=M_{C(\tau)+g(\tau)}=M_{\Theta(\tau)}$. Assume by
induction that the proposition is true for any rooted tree having
less than $n$ nodes. Consider now $\tau$ a rooted tree with $n$ nodes and the $\tau_i$ defined as above. 
By recurrence
$D^\star(\tau_i)\stackrel{d}=M_{\Theta(\tau_i)}$, and thus, by independence of
the $M_{\Theta(\tau_i)}$'s, $D^\star(\tau)={\cal
Y}\left[\sum_{i}\Theta(\tau_i),g(\tau)\right]+\max_i M_{ \Theta(\tau_i)}$ is in
distribution equal to $M_{(\sum_{i}\Theta(\tau_i))+g(\tau)}\stackrel{d}{=}M_{\Theta(\tau)}$ by Corollary \ref{cor-expo}.
\end{proof}

As a corollary we have
\begin{theorem}For any algorithm $\ALGO(\Delta)$ of the first category, any tree $T$, 
\begin{equation}\label{gam}
q_u= 1-\sum_{1\leq i\leq k}\frac{\Theta(T[u_i,\ca u])}{\Theta(T[u,\ca
u_i])+\Theta(T[u_i,\ca u])}
\end{equation}
\end{theorem}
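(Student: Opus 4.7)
The plan is to combine the two main results already established: Corollary \ref{elec-gene}, which expresses $q_u$ in terms of probabilities of comparison between durations of two independent directed eliminations, and Proposition \ref{elec-first}, which tells us that each such duration is distributed as $M_n$ for an explicit integer $n = \Theta(\tau)$. What remains is then a one-line probability computation on the maxima of exponential random variables.

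More precisely, I would first apply Corollary \ref{elec-gene} to write
\[
q_u = 1 - \sum_{1 \leq i \leq k} \mathbb{P}\bigl(D^\star(T[u,\ca u_i]) < D^\star(T[u_i,\ca u])\bigr),
\]
and then invoke Proposition \ref{elec-first} on each of the two rooted trees $T[u,\ca u_i]$ and $T[u_i,\ca u]$. Setting $a = \Theta(T[u,\ca u_i])$ and $b = \Theta(T[u_i,\ca u])$, and recalling that the two directed eliminations are performed independently, the probability in the sum becomes $\mathbb{P}(M_a < M_b')$ where $M_a$ and $M_b'$ are independent and respectively distributed as the maximum of $a$ (resp.\ $b$) i.i.d.\ $\Expo(1)$ variables.

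The key elementary step is then to show that $\mathbb{P}(M_a < M_b') = b/(a+b)$. Using the CDF $\mathbb{P}(M_n \leq x) = (1-e^{-x})^n$ given in the text, and conditioning on $M_b' = y$, one obtains
\[
\mathbb{P}(M_a < M_b') = \int_0^\infty (1-e^{-y})^a \, b\,(1-e^{-y})^{b-1} e^{-y}\, dy.
\]
The substitution $t = 1 - e^{-y}$ reduces this to $b \int_0^1 t^{a+b-1}\, dt = b/(a+b)$. Substituting back with $a = \Theta(T[u,\ca u_i])$ and $b = \Theta(T[u_i,\ca u])$ yields exactly the claimed formula (\ref{gam}).

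There is no real obstacle here: the non-trivial work has been done in Proposition \ref{elec-first} (which converts the algorithmic quantity $D^\star(\tau)$ into $M_{\Theta(\tau)}$) and in Corollary \ref{elec-gene} (which reduces the election probability to a comparison of independent durations). The theorem is essentially a bookkeeping step combining these two ingredients with the classical identity $\mathbb{P}(M_a < M_b') = b/(a+b)$; the only point deserving a brief mention is the independence of the two durations $D^\star(T[u,\ca u_i])$ and $D^\star(T[u_i,\ca u])$, which is built in the definition of the coupling used in the proof of the proposition preceding Corollary \ref{elec-gene}.
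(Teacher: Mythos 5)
Your proposal is correct and follows essentially the same route as the paper, which likewise obtains the theorem by combining Corollary \ref{elec-gene}, Proposition \ref{elec-first}, and the identity for $\mathbb{P}(M_a<M_b)$ when $M_a$ and $M_b$ are independent maxima of i.i.d.\ $\Expo(1)$ variables. Your explicit integral moreover yields the correct orientation $\mathbb{P}(M_a<M_b)=b/(a+b)$, which is exactly what Formula (\ref{gam}) requires; the paper's own proof states the identity as $a/(a+b)$, an evident typo that your computation silently fixes.
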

\begin{proof}
This is a consequence of Propositions \ref{elec-gene} and
\ref{elec-first} and of the following identity: if $M_a$ and $M_b$ are
independent, then $\mathbb{P}(M_a<M_b)=a/(a+b)$.
\end{proof}

This theorem has a direct consequence quite surprising, since it
deals with very general function $\Gamma$. It is obtained by summing
Equality (\ref{gam}) over all nodes:
\begin{corollary} For {\it any} tree $T$, any choice of positive
  integer values function $\Gamma_u=(C_u,g_u)$
$$\sum_u\left[1-\sum_i\frac{\Theta(T[u_i,\ca{u}])}{\Theta(T[u_i,\ca{u}])+\Theta(T[u,\ca{u_i}])}\right]=1.$$
\end{corollary}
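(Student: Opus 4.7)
The plan is extremely short: the corollary is essentially a book-keeping consequence of the preceding theorem together with the tautology that exactly one node is elected.

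First I would note that by the theorem, for every node $u$ of $T$ (with neighbors $u_1,\dots,u_k$) the bracketed expression equals the election probability $q_u$ under $\ALGO(\Delta)$. Summing over all $u \in V$ gives
\begin{equation*}
\sum_{u\in V}\left[1-\sum_i \frac{\Theta(T[u_i,\ca{u}])}{\Theta(T[u_i,\ca{u}])+\Theta(T[u,\ca{u_i}])}\right] = \sum_{u\in V} q_u.
\end{equation*}
It then suffices to observe that $\sum_{u\in V} q_u = 1$. This is a direct consequence of the description of the process: the events $\{u \text{ is elected}\}$, for $u\in V$, are pairwise disjoint (only one node survives at the end of the elimination), and their union has probability $1$ because the process terminates almost surely with a unique surviving node. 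Termination follows because at every instant the still-living leaves have finite, almost surely distinct remaining lifetimes (the distributions $\mathcal{D}_u$ are atomless by Remark 1 and the number of nodes is finite), so after $|T|-1$ eliminations exactly one node remains.

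The only nontrivial point worth emphasising is the a.s.\ distinctness of the times of elimination; but this is already implicit in the atomless assumption on the $\mathcal{D}_u$'s recalled in Remark 1, so no further argument is needed. The resulting identity is then valid for every finite tree $T$ and every choice of positive integer valued computed values $\Gamma_u = (C_u, g_u)$, as claimed; the algorithmic content has completely disappeared, and what remains is a purely combinatorial identity on trees, which is what makes the corollary striking.
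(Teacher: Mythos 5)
Your proposal is correct and matches the paper's own argument essentially verbatim: the paper likewise obtains the identity by summing Equality (\ref{gam}) over all nodes and then invokes Remark \ref{fr} (atomless lifetime distributions, hence a.s.\ distinct disappearance times and a.s.\ termination with a unique survivor) to conclude that $\sum_u q_u = 1$. Nothing is missing; your emphasis on the a.s.\ distinctness of elimination times is exactly the point the paper also singles out.
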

Remark \ref{fr} ensures that almost surely the election eventually
succeeds. Indeed, each leaf eventually dies
out with probability one, and then the election stops after a finite time. 
All the disappearance dates are different, since the lifetimes distributions have no atom: at the end it
eventually remains only one leaving node which is elected.

\begin{remark}
In general the denominator in the RHS of
(\ref{gam}) depends on the node $u$ and, thus, apart from the two
first examples below where this denominator is constant, the formula
(\ref{gam}) cannot be ``simplified''.
\end{remark}
\subsection{Examples }\label{exa}
\begin{enumerate}
\item The uniform electing algorithm (treated in Example \ref{esa1})
  is a particular case of this model by letting $g_u=1$ and,
  therefore, $\Theta(t)=|t|$, the total number of nodes in $t$. Since
  each node is either in $T[u,\ca u_i]$ or in $T[u_i,\ca u]$, by
  (\ref{gam})
\begin{eqnarray*}
q_u&=&1-\sum_{1\leq i\leq k} \frac{|T[u_i,\ca u]|}{|T[u,\ca
u_i]|+|T[u_i,\ca u]|}=1-\frac{|T \setminus\{u\}|}{|T|}=\frac{1}{|T|};
\end{eqnarray*}
this is the uniform distribution on $T$, as found by M\'etivier \& al.
\item Assume that all prescribed weights are positive integers. If
  $g_u=w_u$ for every nodes then $\Theta(t)=\sum_{u \in t} w_u$ the
  total weight of the rooted tree $t$. In this case $q_u=
  \frac{w_u}{w(T)}$ where $w(T)=\sum_{u\in T} w(u)$ is the total weight
  in $T$. Indeed, in the RHS of (\ref{gam}) the denominator is equal
  to $w(T)$ whatever is the value of $i$, and summing the numerators
  gives $w(T)-w_u$.
\item For $g_u=\deg(u)$,  $q_u$ becomes proportional to $\deg(u)$ (take  $w_u=\deg(u)$ in the previous point 2).
\item In the case where $g_u=1$ for the leaves and $g_u=|t|$ more
  generally for all the nodes, then $\Theta(t)=PLS(t)+|t|$ becomes the
  path length of (the rooted tree) $t$ plus its size. Then Formula
  (\ref{gam}) gives the value of $q_u$.
\end{enumerate}

\subsection{Real-valued weights}
\label{treo}
In Example \ref{exa}.2, we gave an algorithm of the
first category such that $q_u$ is proportional to $w_u$ provided that the
$w_u's$ are integers. The
computations relying on Corollary \ref{cor-expo}, the weights have
to be integer valued, or say have a known common divisor. 
A natural question arises: is there an algorithm such that
$q_u$ is proportional to general real-valued weights $w_u$'s? We were not able to answer
to this question, but using a randomized version
of the algorithms of the first category, we provide an
algorithm that may fail with a small probability, but such that
conditionally on success, the $q_u$'s are indeed proportional to the
$w_u$'s.

\par The difference with the algorithm described above is as
follows. Instead of using its weight $w_u$ as a parameter in a
distribution ${\cal Y}(n,k)$, a node $u$ becoming a leaf, uses its
weight $w_u$ as a parameter of a Poisson distribution: it generates
$W_u$ a r.v. following the Poisson($w_u$) distribution and then uses
this integer as its weight in the description of algorithms of the
first category we gave. In other words, the computed value $g_u$
instead of being simply $w_u$ will take the value
$k$ with probability $\exp(-w_u)w_u^k/k!$. \\
Let us discuss some points linked to the failure
of the algorithm.
\begin{remark}
-- If the random generated $W_u$ is zero for some $u$,
then conditionally to $W_u$ the remaining lifetime is $\Expo(0)$
distributed, that is zero almost surely: $u$ is
eliminated immediately.\\
-- If {\it all} nodes generate zero, then the algorithm fails:
  {\it it terminates without choosing any node}.
  The probability of failure for the algorithm is 
  $e^{-w(T)}$ where  $w(T)=\sum_{u\in V}w_u$ is the total weight. It becomes
  insignificant whenever $w(T)$ grows. To guarantee the success with a
  high probability, it suffices to multiply $w$ by a great number $c$
  known by all nodes.
\end{remark}
The following lemma, which is easily proved, simplifies the proof of the
main proposition of this section.
\begin{lemma}
Let $X_1,...,X_n$ be $n$ independent r.v. of Poisson distributions
  with parameters ${\lambda}_1,...,{\lambda}_n$ respectively. 
For any $k>0$, the distribution of $X_1$ conditionally on
$X_1+\dots+X_n=k$ is binomial
$B(k,\lambda_1/(\lambda_1+\dots+\lambda_n))$.
\end{lemma}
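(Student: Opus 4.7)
The plan is a short direct computation using two standard facts: first, that the sum of independent Poisson random variables is again Poisson with parameter equal to the sum of parameters; and second, that a conditional probability of the form $\mathbb{P}(X_1=j \mid S=k)$ can be rewritten as $\mathbb{P}(X_1=j,\,S=k)/\mathbb{P}(S=k)$, where $S=X_1+\dots+X_n$.

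First I would set $\Lambda=\lambda_1+\dots+\lambda_n$ and $\mu=\lambda_2+\dots+\lambda_n=\Lambda-\lambda_1$. By the stability of the Poisson family under independent sums, $X_2+\dots+X_n$ has distribution Poisson$(\mu)$ and $S=X_1+\dots+X_n$ has distribution Poisson$(\Lambda)$. Then for $0\le j\le k$, using the independence of $X_1$ and $X_2+\dots+X_n$:
\[
\mathbb{P}(X_1=j\mid S=k)=\frac{\mathbb{P}(X_1=j)\,\mathbb{P}(X_2+\dots+X_n=k-j)}{\mathbb{P}(S=k)}.
\]
Plugging in the Poisson mass functions yields
\[
\frac{\bigl(e^{-\lambda_1}\lambda_1^j/j!\bigr)\bigl(e^{-\mu}\mu^{k-j}/(k-j)!\bigr)}{e^{-\Lambda}\Lambda^k/k!}=\binom{k}{j}\left(\frac{\lambda_1}{\Lambda}\right)^{j}\left(1-\frac{\lambda_1}{\Lambda}\right)^{k-j},
\]
which is exactly the mass function of $B\!\left(k,\lambda_1/\Lambda\right)$.

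There is essentially no obstacle here: the exponential factors cancel cleanly, and the combinatorial identity $k!/(j!(k-j)!)=\binom{k}{j}$ does the rest. The only mild point worth noting is the requirement $k>0$, which ensures $\mathbb{P}(S=k)>0$ so that the conditional probability is well defined (and it is automatic that $j$ ranges over $\{0,1,\ldots,k\}$, since otherwise both sides vanish).
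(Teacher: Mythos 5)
Your computation is correct and is exactly the standard argument the paper has in mind: it states this lemma without proof, calling it ``easily proved,'' and the intended route is precisely your factorization $\mathbb{P}(X_1=j\mid S=k)=\mathbb{P}(X_1=j)\,\mathbb{P}(X_2+\dots+X_n=k-j)/\mathbb{P}(S=k)$ combined with Poisson additivity. Your remark that $k>0$ guarantees $\mathbb{P}(S=k)>0$ (indeed $\mathbb{P}(S=k)>0$ for every $k\geq 0$ when some $\lambda_i>0$, so the conditioning is well defined) is a sensible finishing touch, and nothing is missing.
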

\begin{proposition}\label{weighted}
Let $T$ be any tree. The probability that the algorithm chooses a
node $u$ conditioned by the event that not all nodes generate 0
is proportional to $w_u$~:
${\mathbb P}\left(u \mbox{ elected }\ \Big| \ \sum_{v \in
V}W_v>0\right)=  {w_u}/{w(T)}.$
\end{proposition}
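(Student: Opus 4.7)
The plan is to condition on the Poisson weights $(W_v)_{v\in V}$, reduce to an algorithm of the first category with integer (possibly zero) weights, and then exploit the Binomial-from-Poisson lemma that has just been stated.

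First I would note that since the draws $W_v \sim \textrm{Poisson}(w_v)$ are mutually independent and independent of all uniform generators used for lifetimes, we may imagine that the whole tuple $(W_v)_{v\in V}$ is realized at time $t=0$ and then ``frozen''. Conditionally on $(W_v)_v$, the algorithm then becomes exactly the first-category algorithm of Example \ref{exa}.2 with (deterministic) computed values $g_v = W_v$. The key claim at this stage is that, writing $W(T) = \sum_{v\in V} W_v$, on the event $\{W(T) > 0\}$ one has
\begin{equation*}
\mathbb{P}\bigl(u \textrm{ elected} \,\big|\, (W_v)_v\bigr) = \frac{W_u}{W(T)}.
\end{equation*}
When all $W_v$ are strictly positive this is exactly Example \ref{exa}.2. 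The only subtlety is the degenerate case where some $W_v$ vanish, so that $\Theta(T[u,\ca{u_i}])$ or $\Theta(T[u_i,\ca u])$ may be $0$. One checks these cases by going back to the directed-elimination formulation: if $\Theta(\tau)=0$ then $D^\star(\tau)=0$ almost surely (empty sum of exponentials), so $\mathbb{P}(D^\star(T[u,\ca{u_i}])<D^\star(T[u_i,\ca u]))$ still coincides with $\Theta(T[u_i,\ca u])/W(T)$ (both equal $0$ or $1$ appropriately). Summing the numerators of these ratios over $i$ then gives $W(T)-W_u$ as before, hence the displayed identity.

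Next I would take expectations over $(W_v)_v$:
\begin{equation*}
\mathbb{P}\bigl(u \textrm{ elected},\, W(T)>0\bigr) = \mathbb{E}\!\left[\frac{W_u}{W(T)}\,\mathbf{1}_{W(T)>0}\right].
\end{equation*}
Conditioning on the value $k$ of $W(T)$ and invoking the preceding lemma with parameters $\lambda_v = w_v$, we know that given $W(T)=k\geq 1$ the random variable $W_u$ is $B(k,\,w_u/w(T))$ distributed, so $\mathbb{E}[W_u/k \mid W(T)=k] = w_u/w(T)$ for every $k\geq 1$. This constant emerges from the conditional expectation and gives
\begin{equation*}
\mathbb{P}\bigl(u \textrm{ elected},\, W(T)>0\bigr) = \frac{w_u}{w(T)}\,\mathbb{P}\bigl(W(T)>0\bigr).
\end{equation*}
Dividing by $\mathbb{P}(W(T)>0) = 1-e^{-w(T)}$ yields the stated proportionality.

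The main obstacle is the verification in the previous paragraph that Example \ref{exa}.2's formula $q_u = W_u/W(T)$ survives the presence of zero weights; one must argue either directly as above or by a small perturbation $W_v + \varepsilon$ followed by a continuity argument. Everything else is a bookkeeping calculation built on the Poisson-to-Binomial lemma.
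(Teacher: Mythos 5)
Your proof is correct and takes essentially the same route as the paper's: condition on the Poisson draws $(W_v)_{v\in V}$, invoke Example \ref{exa}.2 to get the conditional election probability $W_u/\sum_v W_v$, then apply the Poisson-to-binomial lemma so that the conditional expectation equals $w_u/w(T)$ on every event $\{\sum_v W_v=k\}$ with $k\geq 1$. Your explicit check that the first-category formula survives zero weights (via $\Theta(\tau)=0$ forcing $D^\star(\tau)=0$ almost surely, with the denominator still equal to $\sum_v W_v$) is a detail the paper's proof passes over silently, so it strengthens rather than changes the argument.
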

\begin{proof}
Consider some integers $(k_v)_{v\in V}$ , with at least one $k_v>0$.
Given the values $W_v=k_v$ according to Section \ref{exa},
second example, we have:
\begin{eqnarray*}
{\mathbb P}\left(u \mbox{ elected } | ~W_v=k_v \mbox{ for any } v
\mbox{ in }T \right)&=& {k_u}/({\sum_{v\in V} k_v}).
\end{eqnarray*} 
Therefore the probability that the algorithm chooses $u$ conditioned
by $\sum_vW_v>0$, is nothing but:
\begin{eqnarray*}
{\mathbb P}(u \mbox{ elected }\Big| \ \sum_vW_v>0)&=&{\mathbb
  E}\left(\frac{W_u}{\sum_vW_v}\ \Big|\ \sum_vW_v>0
  \right),
\end{eqnarray*}
where ${\mathbb E}$ denotes the expected value. But then, according to
the previous lemma, for a fixed $k>0$,
$$ {\mathbb
  E}\left(\frac{W_u}{\sum_vW_v}\ | \sum_vW_v=k
\right)=  \frac {w_u}{\sum_v w_v}.
$$ This implies that if the sum of generated numbers is positive,
whatever the values it takes, the probability of $u$ to be elected is
$\frac {w_u}{\sum_v w_v}$. The proposition follows.
\end{proof}

\section{Second category: around the stable distributions}
The second category relies on
Formula (\ref{m2}). One sees that choosing a suitable $D^\star$ may
let the $\max$ operator acting on the RHS disappears: the idea is to choose $D_u^\star$ under the form
\begin{equation}\label{zea}
D_u=X^u-\max_i D(\tau_i)+\sum_i D(\tau_i)
\end{equation} 
for some $X^u$ whose distribution depends of the information received
by $u$. In this case Formula (\ref{m2}) concerning the directed
elimination becomes simply
\[D^\star(\tau)=X^u+\sum_i D^\star(\tau_i).\]
And the duration of a rooted tree satisfies:
\begin{eqnarray}\label{zeb}
D^\star(\tau)
&=&X^{u}+\sum_i D^\star(\tau_i)=  \sum_{v\textrm{ nodes in }\tau} X^v.
\end{eqnarray}
Once again, the involved variables $X^v$  have a distribution that may
depend on the history of the elimination of the sub-tree of $\tau$
rooted in $v$. The algorithms of the second category are parametrized
by all the possible distribution for $X^u$ (the variables $X^u$
appearing in (\ref{zea}) and (\ref{zeb})). \par

In the case where the $X^v$ are i.i.d, the distribution  of $D^\star(\tau)$ is simple: it
is a sum of $|\tau|$ i.i.d. random variables, and then it is indexed
by the unique integer $|\tau|$.  Denoting by $S_n$
a sum of $n$ i.i.d. copies of $X^v$, according to Corollary
\ref{elec-gene} we have for a node $u$ having $u_1,\dots,u_k$ as neighbors,
\begin{equation}\label{un-peu-moche}
q_u=1-\sum_{1\leq i\leq k}\mathbb{P}\big( S_{|T[u,\ca u_i]|}<S_{|T[u_i,\ca u]|}\big).
\end{equation}
There is an interesting case where the computation in
(\ref{un-peu-moche}) can be made explicitly, and leads to close
formulas: the case of the stable distribution with index $1/2$. The
stable distributions are the families of distribution that are stable
for the convolution (see Feller \cite{FEL} for more
information).  We say that $X$ has the stable
distribution with index $1/2$ if the density of $X$ is $f(t)={\bf 1}_{t\geq 0} \frac{e^{-1/(2t)}}{\sqrt{2\pi t^3}}.$
If $X_1,\dots,X_k$ are independent copies of $X$ then
$S_k=X_1+\dots+X_k\stackrel {d}{=}k^2 X$.  
Consider now $S_m$ and $S'_n$ two independent sums
of $m$ and $n$ independent copies of $X$.
One has
\begin{equation}
\mathbb{P}(S_m< S'_n)=\mathbb{P}(m^2 X \leq n^2 X')
\end{equation}
for two copies $X$ and $X'$ of $X$. 
Using the density of $X$ and $X'$, one gets
$\mathbb{P}(S_m< S'_n)=\frac{2}{\pi}\arctan(n/m).$
Hence
\begin{lemma}\label{arc}
For any tree $T$, for any node $u$ having $u_1,\dots,u_k$ as
neighbors, under the algorithm presented above
\[q_u=1-\sum_{1\leq i\leq k}\frac2\pi \arctan\left(\frac{|T[u_i,\ca u]|}{|T[u,\ca u_i]|}\right). \]
\end{lemma}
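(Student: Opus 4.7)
The plan is to combine Corollary \ref{elec-gene} with the identity (\ref{un-peu-moche}) and to carry out the explicit probability computation that the paper has reduced the lemma to. Write $m_i = |T[u,\ca u_i]|$ and $n_i = |T[u_i,\ca u]|$. By the construction of the algorithm, the chosen increments $X^v$ are i.i.d.\ stable-$1/2$, so by (\ref{zeb}) the duration $D^\star(T[u,\ca u_i])$ is distributed as $S_{m_i}$ and, independently, $D^\star(T[u_i,\ca u])$ as $S'_{n_i}$. Hence (\ref{un-peu-moche}) reduces the statement to showing $\mathbb{P}(S_m < S'_n) = \tfrac{2}{\pi}\arctan(n/m)$ for all positive integers $m,n$.

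The first reduction is the scaling property $S_k \stackrel{d}{=} k^2 X$, which is recalled in the paragraph before the lemma and follows directly from stability of index $1/2$. It turns the required probability into $\mathbb{P}(m^2 X < n^2 X')$ for two independent copies $X, X'$ of the stable-$1/2$ law, i.e.\ $\mathbb{P}(Y < (n/m)^2)$ with $Y = X/X'$. So it suffices to identify the distribution of the ratio $Y$.

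The main computational step (and the only real obstacle) is determining the density of $Y$. I would proceed by the change of variables $y = x/x'$ in the joint density: the density of $(Y,X')$ is $f(yx') f(x') x'$, and after inserting the explicit form $f(t) = e^{-1/(2t)}/\sqrt{2\pi t^3}$ the $x'$-integral reduces to $\int_0^\infty (x')^{-2} \exp(-\alpha/x')\,dx'$ with $\alpha = (1+1/y)/2$; by the substitution $u = 1/x'$ this is just $1/\alpha = 2y/(y+1)$. Collecting the remaining prefactors yields the marginal density
\[
g(y) = \frac{1}{\pi\sqrt{y}\,(1+y)}, \qquad y>0.
\]

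The final step is elementary: using $z = \sqrt{y}$,
\[
\mathbb{P}(Y < (n/m)^2) = \int_0^{(n/m)^2} \frac{dy}{\pi\sqrt{y}\,(1+y)} = \frac{2}{\pi}\int_0^{n/m} \frac{dz}{1+z^2} = \frac{2}{\pi}\arctan(n/m).
\]
Plugging this back into (\ref{un-peu-moche}) with the values $m = m_i$, $n = n_i$ gives the stated formula for $q_u$, completing the proof. The only non-routine ingredient is the density computation for $Y$; everything else is a direct chain of substitutions from the results already established.
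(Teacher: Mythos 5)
Your proposal is correct and follows exactly the route of the paper, which reduces the lemma via (\ref{un-peu-moche}) and the scaling identity $S_k\stackrel{d}{=}k^2X$ to the statement $\mathbb{P}(m^2X<n^2X')=\frac{2}{\pi}\arctan(n/m)$ and then invokes the densities; your only addition is to write out explicitly the ratio-density computation $g(y)=1/(\pi\sqrt{y}(1+y))$ that the paper leaves implicit behind the phrase ``using the density of $X$ and $X'$'', and that computation is carried out correctly.
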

In particular, since $\sum q_u=1$ this gives for each tree a formula
related to the arctan function. We review below some examples and
derive formulas.

\subsection{Applications: some identities involving the $\arctan$ function}
Consider the star tree with $n$ nodes: it is the tree where a node $v$
has $n-1$ neighbors, say $v_1,\dots,v_{n-1}$.  By symmetry $q_{v_i}$
does not depend on $i$; since $v_i$ has for only neighbor $v$, by
Lemma \ref{arc}
\[q_{v_1}=1-({2}/{\pi})\arctan(n-1).\]
Using again Lemma \ref{arc},
one has for the center of the star tree
\[q_v=1-\frac{2(n-1)}{\pi}\arctan\left(\frac{1}{n-1}\right).\]
Since $q_v+\sum_{i=1}^{n-1}q_{v_i}=1$ (since a node is eventually
elected with probability 1), we get for any $n\geq 2$,
\begin{equation}\label{ar-s}
\arctan(n-1)+\arctan(1/(n-1))=\pi/2.
\end{equation}
Consider now a sequence of trees $T_n$ such that $T_n$ is formed by two
stars having $\alpha_n+1$ and $\beta_n+1$ nodes with center $u$ and
$v$, linked by an edge between $u$ and $v$. The election probability
of any leaf is $q_{v_i}=1-({2}/{\pi})\arctan\left(\alpha_n+\beta_{n+1}\right),$
when
\begin{eqnarray*}
q_u&=&1-\frac{2\alpha_n}{\pi}\arctan\left(\frac 1 {\alpha_n+\beta_{n+1}}\right)-\frac{2}{\pi}\arctan\left(\frac{\beta_{n}+1}{\alpha_n+1}\right)\\
q_v&=&1-\frac{2\beta_n}{\pi}\arctan\left(\frac1{\alpha_n+\beta_{n+1}}\right)-\frac{2}{\pi}\arctan\left(\frac{\alpha_{n}+1}{\beta_n+1}\right).
\end{eqnarray*}
Using $(\alpha_n+\beta_n)q_{v_1}+q_u+q_v=1$ and (\ref{ar-s}), we get 
\[\frac{2}{\pi}\left(\arctan\left(\frac{\alpha_n+1}{\beta_n+1}\right)+\arctan\left(\frac{\beta_n+1}{\alpha_n+1}\right)\right)=1.\]
If $\alpha_n/\beta_n\to x >0$, by continuity of $\arctan$ one obtains the famous
formula
\[\arctan(x)+\arctan(1/x)={\pi}/{2}.\]
Going further, let $T_n$ be the sequence of trees having a path of size $k$
($k$ nodes $u_1,\dots,u_k$ such that there is an edge between $u_i$
and $u_{i+1}$ and such that $u_i$ has $\alpha_{n,i}$ other neighbors
that are leaves).  The probability of election of any of the $\sum
\alpha_{n,i}$ leaves is
$q_l=1-\frac{2}{\pi}\arctan(\sum \alpha_{n_i}+k-1)$, that of $u_i$
is
\[1-\frac{2}{\pi}\left[\alpha_{n,i}\arctan\left(\frac1{\sum \alpha_{n,i}+k-1}\right)+\arctan\left(\frac{\sum_{j>i}(\alpha_{n,j}+1)}{\sum_{j\leq i}(\alpha_{n,j}+1)}\right)+\arctan\left(\frac{\sum_{j<i}(\alpha_{n,j}+1)}{\sum_{j\geq i}(\alpha_{n,j}+1)}\right)\right].\]
Finally, assuming that for any $i$, $\alpha_{n,i}\to \alpha_i$ for
some positive real number $\alpha_i$, we get by continuity, and
using that the sum of all events must be 1, that for any positive real
number $\alpha_1,\dots,\alpha_k$,
\begin{equation}\label{form2}
\sum_i\left[\arctan\left(\frac{\sum_{j>i}\alpha_{j}}{\sum_{j\leq
i}\alpha_j}\right)+\arctan\left(\frac{\sum_{j<i}\alpha_{j}}{\sum_{j\geq
i}\alpha_{j}}\right)\right]=\frac{\pi}{2}(k-1).
\end{equation}
Each simple finite tree used as a skeleton on which are grafted some
packets of leaves (with size $\alpha_{n,k}$, $k$ corresponding to a
labeling of the nodes of the skeleton) will provide a formula similar
to (\ref{form2}).

\small\renewcommand{\baselinestretch}{1}

\bibliographystyle{plain}
\bibliography{elec}

\end{document}